\documentclass[journal,10pt]{IEEEtran}
\usepackage{amsmath}
\usepackage{graphicx}
\usepackage{indentfirst}
\usepackage{cases}
\usepackage[noadjust]{cite}
\usepackage{filecontents}

\usepackage{tabularx,booktabs, caption}
\usepackage{makecell}

\newcommand{\blue}{\textcolor{black}}

\newcolumntype{C}{>{\centering\arraybackslash}X} 
\usepackage{fancyhdr}
\usepackage[justification=centering]{caption}
\usepackage{amsmath,amssymb,mathtools,bm,etoolbox}
\usepackage{color}
\usepackage{array}
\usepackage{mathtools}
\usepackage[british]{babel}
\usepackage{csquotes}
\usepackage{nccmath}
\usepackage{gensymb}
\usepackage[shortlabels]{enumitem}
\usepackage[nodisplayskipstretch]{setspace}
\usepackage[ruled,linesnumbered]{algorithm2e}
\usepackage[font=footnotesize]{caption}
\usepackage{booktabs}

\SetLabelAlign{bibright}{\hss\llap{[#1]}}
\newcounter{mynum}

\newtheorem{proposition}{Proposition}

\usepackage{hyperref}
\hypersetup{
     colorlinks   = true,
     citecolor    = red,
     linkcolor    = red,
     urlcolor     = black
}

\allowdisplaybreaks 

\title{\huge OTFS-Enabled Delayed SINR-Feedback Power Control for Reliable and Fair High-Mobility UAV Communications}
\begin{document}
    \author{Thuan Van Le, Nguyen Cong Luong, Trong-Dai Hoang,
    Vo Nguyen Quoc Bao,~\IEEEmembership{Senior Member,~IEEE},
    Thien Huynh-The,~\IEEEmembership{Senior Member,~IEEE},
    Xingwang Li,~\IEEEmembership{Senior Member,~IEEE},
    and Ngo Hoang Tu,~\IEEEmembership{Member,~IEEE}
    \thanks{T. V. Le is with the Faculty of Electrical and Electronic Engineering, Phenikaa School of Engineering, Phenikaa University, Hanoi 12116, Vietnam (e-mail: thuan.levan@phenikaa-uni.edu.vn).}
    \thanks{N. C. Luong is with the Phenikaa School of Computing, Phenikaa University, Hanoi 12116, Vietnam (e-mail: luong.nguyencong@phenikaa-uni.edu.vn).}
    \thanks{T.-D. Hoang is with the Global Big Data Technologies Centre, University of Technology Sydney, Ultimo, NSW 2007, Australia (e-mail: dai.t.hoang@student.uts.edu.au).}
    \thanks{V. N. Q. Bao and N. H. Tu are with the Faculty of Information Technology, Van Lang School of Technology, Van Lang University, Ho Chi Minh City 70000, Vietnam (e-mail: bao.vnq@vlu.edu.vn, tu.nh@vlu.edu.vn). (\textit{Corresponding author: Ngo Hoang Tu.})}
    \thanks{T. Huynh-The is with the Department of Electronics and Information Engineering, Ho Chi Minh City University of Technology and Engineering, Ho Chi Minh City 71307, Vietnam (e-mail: thienht@hcmute.edu.vn).}
    \thanks{X. Li is with the School of Physics and Electronic Information Engineering, Henan Polytechnic University, Jiaozuo 454003, China (e-mail: lixingwangbupt@gmail.com).}
    }%
\maketitle

\begin{abstract}
This paper develops a power control framework driven by delayed signal-to-interference-plus-noise ratio (SINR) feedback for orthogonal time frequency space (OTFS) unmanned aerial vehicle (UAV) communications operating under high mobility, with reliability and fairness as the primary design targets.
\blue{A base station with a uniform linear array serves several UAVs on a common OTFS frame, while the path delays, Doppler shifts, and inter-UAV interference are determined by the three-dimensional propagation geometry and the base-station array response rather than by a postulated coupling model.}
In place of instantaneous channel state information, the proposed controller refreshes the transmit-power vector from delayed SINR measurements alone, which matches the practical limitations of fast-fading aerial links.
\blue{A prediction--smoothing--projection rule mixes a reliability share, a fairness share and a spectral-efficiency share, each normalized separately, so that the utility weights control the closed loop directly.}
\blue{Simulations show that the effective SINR of OTFS changes 37.8\% less per frame than that of orthogonal frequency division multiplexing (OFDM) at 70 m/s under the same numerology, and that the resulting controller raises the average minimum SINR by 1.21 dB over equal power allocation at 40 m/s while lifting Jain's fairness index from 0.833 to 0.970, at a sum-rate cost of 24.6\% that the utility weights keep under the designer's control.}
\blue{The margin of OTFS over OFDM within the same controller widens from 0.18 dB at 10 m/s to 0.81 dB at 90 m/s, showing that the waveform contribution to the usefulness of stale SINR feedback increases with mobility.}
\end{abstract}
\vspace{-0.15cm}
\begin{IEEEkeywords}
	 UAV communications, SINR-feedback, power control, orthogonal time frequency space (OTFS), reliability, fairness, and high-mobility.
\end{IEEEkeywords}

\vspace{-0.25cm}
\section{Introduction}\label{Sect:Intro}

Unmanned aerial vehicle (UAV) communications are anticipated to occupy a central role in next-generation wireless networks, owing to flexible deployment, dominant line-of-sight (LoS) propagation, and broad mission scope spanning surveillance, emergency response, data collection, and temporary coverage enhancement \cite{dao2021survey}.
The price of such agility, however, is a rapidly time-varying channel: pronounced Doppler effects together with non-negligible feedback delays render timely and accurate channel state information (CSI) acquisition difficult in deployment \cite{7470933}.
Consequently, CSI-driven adaptation often operates on stale measurements and degrades in reliability whenever control or resource-allocation loops demand frequent refreshes.

Orthogonal time frequency space (OTFS) modulation has emerged as a promising waveform for high-mobility communications, as it operates in the delay-Doppler (DD) domain, where the channel representation is more structured and often more stable than in conventional orthogonal frequency division multiplexing (OFDM) systems \cite{7925924,10584089}.
Recent work has established solid foundations in waveform design, detection and transceiver development, has demonstrated the potential of multi-input multi-output OTFS in high-Doppler environments \cite{8647394,10584089}, and has begun to address UAV-oriented mobility-aware link management \cite{10663682,10063356}.
Nevertheless, a clear gap remains in delayed-feedback power control tailored to reliable and fair UAV communications under high mobility, where the focus is on weak-user protection and balanced performance rather than aggressive throughput maximization.

Delayed signal-to-interference-plus-noise ratio (SINR)-feedback control is attractive in this setting because scalar SINR is far cheaper to measure and report than full instantaneous CSI \cite{4641946}.
\blue{The classical convergence guarantees of SINR-feedback power control, from the distributed autonomous recursion of \cite{260747} to the standard interference-function framework of \cite{414651}, can in principle be applied to an OTFS UAV system. However, their conventional convergence guarantees rely on a sufficiently stationary interference mapping and a feasible target. In a high-mobility aerial link, both can change over the feedback horizon, so a stale-feedback recursion can track an outdated operating point. Optimization-based allocation such as the weighted minimum mean-square error (WMMSE) method of \cite{5756489} removes the fixed-target assumption, but requires the complete $K\times K$ link-gain matrix, for $K$ simultaneously served UAVs.}
\blue{The limitation addressed here is the validity horizon of the reported link quality rather than only its feedback overhead. The delayed SINR inherits the temporal variation of the effective link response: under OFDM it reflects a rapidly varying time-frequency response, whereas under OTFS it is formed from DD components whose physical delays and Dopplers evolve with the propagation geometry.}

\blue{Motivated by these considerations, this paper investigates delayed SINR-feedback power control for reliable and fair OTFS UAV communications under high mobility. The contributions are threefold. First, we build a multi-UAV OTFS downlink model whose DD channel, per-path Doppler and inter-UAV coupling follow from the three-dimensional air-to-ground geometry and the base-station (BS) array response, so that the OTFS and the OFDM branches share one common path set. Second, we design a controller that consumes delayed SINR alone, corrects it over the actual prediction horizon, and merges a reliability share, a fairness share and a spectral-efficiency share through the utility weights, with a feasibility and bounded-variation guarantee for the projected recursion. Third, we quantify the waveform and the controller contributions separately, and report the reliability and fairness gains together with their sum-rate cost.}

\section{System Model}
\label{sec:system}

We consider a downlink UAV communication system in which a BS serves $K$ UAV terminals over an OTFS air interface, with $\mathcal{K}=\{1,2,\ldots,K\}$ denoting the set of UAVs. The BS is equipped with a horizontal uniform linear array (ULA) of $N_{\rm t}$ elements spaced by half a wavelength, and the UAVs are served simultaneously on the same OTFS frame through per-UAV transmit beamforming, so that the interference among UAVs is the beam leakage of the array rather than an abstract coupling.

{\color{black}
\subsection{OTFS Frame and Geometry-Based Delay-Doppler Channel}
An OTFS frame carries $M$ delay bins and $N$ Doppler bins with subcarrier spacing $\Delta f$, symbol duration $T=1/\Delta f$, bandwidth $B=M\Delta f$ and frame duration $T_{\rm f}=NT$. One frame is one power-control slot, so the delay and Doppler resolutions of the receiver are $1/B$ and $1/T_{\rm f}$, respectively \cite{7925924,9392379}.
The DD channel of UAV $k$ at slot $t$ is written as
\begin{equation}
h_k(\tau,\nu;t)=\sum\limits_{p=0}^{P-1} h_{k,p}(t)\,\delta\!\left(\tau-\tau_{k,p}(t)\right)\delta\!\left(\nu-\nu_{k,p}(t)\right),
\label{eq:dd_channel}
\end{equation}
where $p=0$ indexes the LoS path and $p\ge 1$ indexes the paths reflected by static ground scatterers.
Let $\mathbf{q}_k(t)\in\mathbb{R}^3$ and $\mathbf{v}_k(t)\in\mathbb{R}^3$ denote the position and the velocity of UAV $k$, and let $\mathbf{u}_{k,p}(t)$ be the unit vector along which path $p$ arrives at the UAV. The path delay is the propagation length divided by $c$, and the Doppler shift is obtained as the projection of the velocity on the arrival direction, i.e.,
\begin{equation}
\nu_{k,p}(t)=\frac{f_c}{c}\,\mathbf{v}_k^{\sf T}(t)\,\mathbf{u}_{k,p}(t),
\label{eq:doppler_geo}
\end{equation}
with $f_c$ the carrier frequency. The Doppler of every path is therefore determined jointly by the speed and by the flight direction relative to the path, without any postulated speed-dependent factor.
The LoS power share follows a Rician factor that grows with the elevation angle $\varphi_k(t)$ of the UAV as $K_{k}(t)|_{\rm dB}=K_{\rm min}+(K_{\rm max}-K_{\rm min})\,2\varphi_k(t)/\pi$, which reproduces the elevation dependence reported for air-to-ground links \cite{8709739}, and the path amplitudes follow the distance-dependent path loss $G_0\left(d_{\rm ref}/d_{k,p}(t)\right)^{\nu_{\rm PL}}$.

\subsection{Beamforming and Effective Delay-Doppler Gains}
Let $\mathbf e_{\rm A}$ denote the unit vector along the BS ULA axis and
$\mathbf s_{k,p}^{\rm BS}(t)$ the unit departure direction of path $p$
from the BS. The corresponding normalized spatial frequency is
$\zeta_{k,p}(t)\triangleq
\mathbf e_{\rm A}^{\mathsf T}\mathbf s_{k,p}^{\rm BS}(t)$.
For half-wavelength antenna spacing, the BS array response is
$\mathbf a(\zeta)=
[1,e^{-j\pi\zeta},\ldots,e^{-j\pi(N_t-1)\zeta}]^{\mathsf T}$.
The BS steers one unit-norm beam per UAV, $\mathbf{w}_j(t)=\mathbf{a}^{*}(\zeta_{j,0}(t-d))/\sqrt{N_{\rm t}}$, matched to the LoS spatial frequency observed $d$ slots earlier, which is the same delay that affects the SINR feedback. The coefficient with which path $p$ of UAV $k$ reaches the receiver through beam $j$ is expressed as
\begin{equation}
c^{(j)}_{k,p}(t)=h_{k,p}(t)\,\mathbf{a}^{\sf T}\!\left(\zeta_{k,p}(t)\right)\mathbf{w}_j(t).
\label{eq:beam_coeff}
\end{equation}
The interference that UAV $k$ receives from beam $j$ is thus generated by the channel of UAV $k$ itself, observed through a beam pointed elsewhere. The delayed LoS direction used for beam steering is assumed available from routine navigation and beam tracking, and is not an input to the proposed power controller, whose update uses delayed SINR feedback only.

Under the standard DD-domain representation, the received OTFS grid of UAV $k$
at slot $t$ is written as
\begin{align}
&Y_k[l,q;t]
=
\sum_{j\in\mathcal{K}}
\sqrt{P_j(t)} \nonumber \\
&\sum_{l',q'}
H^{(j)}_k[l',q';t]\,
X_j\!\left([l-l']_M,[q-q']_N;t\right)
+Z_k[l,q;t],
\label{eq:otfs_io}
\end{align}
where $X_j[l,q;t]$ denotes the unit-power DD-domain symbol grid transmitted
towards UAV $j$, i.e., $\mathbb{E}[|X_j[l,q;t]|^2]=1$,
$Z_k[l,q;t]$ is the noise, and $H^{(j)}_k[l',q';t]$ follows from the paths
of \eqref{eq:dd_channel} through the beam coefficients in
\eqref{eq:beam_coeff}. A fractional delay or Doppler offset spreads a path over neighboring DD bins according to the corresponding Dirichlet kernel \cite{8424569}.
For the effective-SINR abstraction used by the power controller, the receiver combines the retained energy of the resolved DD coefficients of \eqref{eq:otfs_io} by maximum-ratio combining, so paths that share a grid cell add coherently while distinct cells add in power. Collecting into $\mathcal{B}_{l,q}$ the paths that fall in delay bin $l$ and Doppler bin $q$ gives
\begin{equation}
g^{\rm O}_{k,j}(t)=\sum\nolimits_{(l,q)}\Big|\sum\nolimits_{p\in\mathcal{B}_{l,q}}\sqrt{\epsilon_{k,p}(t)}\;c^{(j)}_{k,p}(t)\Big|^{2},
\label{eq:gain_otfs}
\end{equation}
where $\epsilon_{k,p}$ is the fraction of the path energy that the detector retains inside the delay and Doppler taps it processes. For a fractional offset $\kappa$, the retained fraction follows the Dirichlet kernel of the transform length, and the remaining share $1-\epsilon_{k,p}$ leaks onto the untreated taps and acts as self-interference $\varepsilon^{\rm O}_k(t) \approx \sum_{p=0}^{P-1}\Big[1-\epsilon_{k,p}(t)\Big]\left|c_{k,p}^{(k)}(t)\right|^2$ \cite{8424569}.

Under the same numerology, the OFDM branch transmits $N$ symbols of $M$ subcarriers. A Doppler shift $\nu$ destroys the subcarrier orthogonality, and the fraction of the path energy that remains on the intended subcarrier is $\eta_{k,p}={\rm sinc}^2(\nu_{k,p}T)$, the rest appearing as inter-carrier interference \cite{911445}. The channel frequency response of resource element $(m,n)$ is calculated as
\begin{equation}
H^{(j)}_{k}[m,n]=\sum\nolimits_{p}\sqrt{\eta_{k,p}}\,c^{(j)}_{k,p}e^{-j2\pi m\Delta f\tilde\tau_{k,p}}e^{j2\pi \nu_{k,p}nT},
\label{eq:cfr_ofdm}
\end{equation}
with $\tilde\tau_{k,p}$ the excess delay. Every resource element sees its own realization, so the scalar that the UAV can report for the frame is the capacity-effective SINR, namely the value whose rate equals the average per-resource-element rate.

\subsection{Channel-Acquisition Floor}
To compare the two interfaces under the same pilot budget $\rho_{\rm p}$ of the $MN$ grid, we use an effective coefficient-count abstraction. A time-frequency response spanning maximum excess delay $\tau^{\max}_k$ and maximum Doppler $\nu^{\max}_k$ is modeled with the per-frame count $D^{\rm F}_k=\lceil 2\tau^{\max}_kB+1\rceil\lceil 2\nu^{\max}_kT_{\rm f}+1\rceil$, renewed every frame, whereas the sparse DD representation uses the amortized load $D^{\rm O}_k=|\mathcal{B}_k|/N_{\rm re}$ over $N_{\rm re}$ frames \cite{8671740}. Under the normalized equal-pilot-energy convention of the simulations, the residual estimation-noise floor is modeled as
\begin{equation}
\sigma^2_{{\rm e},k}=\sigma^2 D_k\big/\left(\rho_{\rm p}MN\right).
\label{eq:est_floor}
\end{equation}
Here, $\mathcal{B}_k$ denotes the set of significant DD taps of UAV $k$, and $D_k=D_k^{\rm F}$ for OFDM and $D_k=D_k^{\rm O}$ for OTFS. The same convention is applied to both interfaces. Only the effective coefficient count differs, so the OFDM floor grows with Doppler through $D_k^{\rm F}$.
}

\subsection{SINR, Rate and Delayed Feedback}
\blue{Let $P_k(t)$ denote the transmit power allocated to UAV $k$, and let $g_{k,j}(t)$ and $\varepsilon_k(t)$ denote the effective beam gain and residual self-interference of the considered air interface, respectively, with $g_{k,j}(t)=g^{\rm O}_{k,j}(t)$ for OTFS. The noise power is denoted by $\sigma^2$. The received SINR of UAV $k$ is
\begin{equation}
\gamma_k(t)
=
\frac{P_k(t) g_{k,k}(t)}
{\sum\nolimits_{j\in\mathcal{K}\setminus\{k\}} P_j(t) g_{k,j}(t) + P_k(t)\varepsilon_k(t) + \sigma^2_{{\rm e},k}(t) + \sigma^2 },
\label{eq:sinr_def}
\end{equation}
and the achievable rate is $R_k(t)=\log_2(1+\gamma_k(t))$.} The transmit powers obey the total budget $P_{\max}$ and the per-UAV floor $P_{\min}$, i.e., $\sum_{k \in {\cal K}} P_k(t)\le P_{\max}$ and $P_k(t)\ge P_{\min}$, $\forall k$, and the fairness of the achieved rate distribution is assessed by Jain's index \cite{jain1984quantitative}
\begin{equation}
J(t)=
{\left(\sum\nolimits_{k \in {\cal K}} R_k(t)\right)^2}
\Big/{\Big( K\sum\nolimits_{k \in {\cal K}} R_k^2(t)\Big)}.
\label{eq:jain}
\end{equation}
Instead of assuming full and instantaneous CSI at the BS, we consider a delayed SINR-feedback model in which the BS receives from UAV $k$ the observation $\gamma_k^{\mathrm{fb}}(t)=\gamma_k(t-d)$, where $d\geq 0$ is the feedback delay in slots.

\blue{The design targets reliability and fairness while preserving a spectral-efficiency incentive. As the three quantities carry different units and ranges, each is normalized before being combined, giving the utility
\begin{equation}
U(t)
=
\omega_1\frac{\min\!\left(\gamma_{\min}(t),\gamma_{\mathrm{th}}\right)}{\gamma_{\mathrm{th}}}
+
\omega_2 J(t)
+
\omega_3\frac{\sum\nolimits_{k \in {\cal K}} R_k(t)}{K\log_2(1+\gamma_{\mathrm{ub}})},
\label{eq:utility_def}
\end{equation}
where $\gamma_{\min}(t)=\min_{k\in\mathcal{K}}\gamma_k(t)$, $\gamma_{\mathrm{th}}$ is the reliability threshold, $\gamma_{\mathrm{ub}}$ is the SINR ceiling of the link budget, and $\omega_1,\omega_2,\omega_3\ge 0$ with $\omega_1+\omega_2+\omega_3=1$.} Every term of \eqref{eq:utility_def} now lies in $[0,1]$, so the weights express a genuine preference rather than an incidental scale.
Directly maximizing \eqref{eq:utility_def} is challenging because the UAVs are interference-coupled and the BS observes only delayed SINR feedback. Therefore, this work adopts a feedback-control perspective, and the power-update problem is formulated as the design of a low-complexity mapping
\begin{equation}
\mathbf{P}(t+1)=\mathcal{F}\!\left(\mathbf{P}(t),\boldsymbol{\gamma}^{\mathrm{fb}}(t)\right),
\label{eq:control_mapping}
\end{equation}
where $\mathbf{P}(t)=[P_1(t),\ldots,P_K(t)]^{\sf T}$ and $\boldsymbol{\gamma}^{\mathrm{fb}}(t)=[\gamma_1^{\mathrm{fb}}(t),\ldots,\gamma_K^{\mathrm{fb}}(t)]^{\sf T}$.

\section{Proposed Delayed SINR-Feedback Power Control}
\label{sec:proposed}

\subsection{Prediction, Smoothing and Weighted Priority}
\blue{The feedback carries the SINR measured $d$ slots ago, so the correction has to span the same horizon. Writing the one-slot slope of the feedback as $\gamma_k^{\mathrm{fb}}(t)-\gamma_k^{\mathrm{fb}}(t-1)$, the predicted SINR is expressed as
\begin{equation}
\hat{\gamma}_k(t)
=
\Big[
\gamma_k^{\mathrm{fb}}(t)
+
\min(\lambda d,\Lambda_{\max})
\bigl(\gamma_k^{\mathrm{fb}}(t)-\gamma_k^{\mathrm{fb}}(t-1)\bigr)
\Big]_{\gamma_{\mathrm{lb}}}^{\gamma_{\mathrm{ub}}},
\label{eq:gamma_pred}
\end{equation}
where $\lambda \ge 0$ is the delay-compensation factor and $[\cdot]_{\gamma_{\mathrm{lb}}}^{\gamma_{\mathrm{ub}}}$ denotes clipping onto a prescribed SINR interval.}
\blue{The extrapolation gain grows with $d$, because a longer horizon has to be bridged, and saturates at $\Lambda_{\max}$, because a linear extrapolation is only trustworthy within the coherence time of the effective gain.}
Short-term fluctuations are then suppressed by the smoothed estimate
\begin{equation}
\bar{\gamma}_k(t)
=
\alpha \bar{\gamma}_k(t-1)
+
(1-\alpha)\hat{\gamma}_k(t),
\label{eq:gamma_smooth}
\end{equation}
with $\alpha\in[0,1)$ the smoothing coefficient.

\blue{To realize the three preferences in \eqref{eq:utility_def} through a low-complexity per-UAV update, we associate them with non-negative priority proxies. The reliability proxy is the normalized SINR shortfall, the fairness proxy is the normalized rate deficit from the swarm average, and the spectral-efficiency proxy is an SINR-to-power link-quality indicator, i.e.,
\begin{align}
s_k(t)&=\frac{\left[\gamma_{\mathrm{th}}-\bar{\gamma}_k(t)\right]^{+}}{\gamma_{\mathrm{th}}}, \nonumber
\\
f_k(t)&=\frac{\left[\bar{R}(t)-\bar{R}_k(t)\right]^{+}}{\bar{R}(t)},
\nonumber\\
e_k(t)&=\frac{\gamma_k^{\mathrm{fb}}(t)}{P_k(t-d)},
\label{eq:priority_components}
\end{align}
where $\bar{R}_k(t)=\log_2(1+\bar{\gamma}_k(t))$ and $\bar{R}(t)$ is its average over the swarm. Since the BS knows the power $P_k(t-d)$ that it transmitted $d$ slots earlier, $e_k(t)$ is formed from the delayed SINR and the BS power history and therefore requires no additional channel-quality feedback.}

\blue{Every component is turned into a share that sums to $K$, so that a value of one means no preference and an inactive component returns the neutral vector, and the shares are mixed by the utility weights, i.e.,
\begin{equation}
\tilde{q}_k(t)=\sum\nolimits_{i=1}^{3}\omega_i\left(\frac{K x^{(i)}_k(t)}{\sum_{n\in\mathcal{K}} x^{(i)}_n(t)}-1\right),
\label{eq:centered_score}
\end{equation}
with $x^{(1)}_k=s_k$, $x^{(2)}_k=f_k$, $x^{(3)}_k=e_k$, and with the $i$-th bracket set to zero when $\sum_n x^{(i)}_n=0$.}
By construction $\sum_k\tilde{q}_k(t)=0$, so the update redistributes power instead of scaling the whole vector, and the weights of \eqref{eq:utility_def} enter the closed loop directly.

\subsection{Projected Multiplicative Update}
The power update is performed in the log-power domain as
\begin{equation}
u_k(t)
=
\Big[\mu_t \tilde{q}_k(t)
+
\beta\bigl(\tilde{q}_k(t)-\tilde{q}_k(t-1)\bigr)\Big]_{-\Delta_{\max}}^{+\Delta_{\max}},
\label{eq:update_signal}
\end{equation}
where $\mu_t=\mu/(1+\eta d)$ is the delay-aware step size, $\mu>0$ is the nominal step size, $\beta\ge 0$ is a momentum coefficient, $\eta\ge 0$ controls the delay sensitivity, and $\Delta_{\max}>0$ is the maximum log-power increment.
The tentative allocation $\tilde{P}_k(t+1)=P_k(t)\exp(u_k(t))$ may violate the power budget, so the vector $\tilde{\mathbf{P}}(t+1)$ is projected onto the feasible set
\begin{equation}
\mathcal{P}
=
\left\{
\mathbf{P}\in\mathbb{R}_+^K:
\sum\limits_{k \in {\cal K}} P_k\le P_{\max},
\,
P_k\ge P_{\min},\, \forall k
\right\},
\label{eq:feasible_set}
\end{equation}
which yields the final update $\mathbf{P}(t+1)=\Pi_{\mathcal{P}}(\tilde{\mathbf{P}}(t+1))$ with the Euclidean projection $\Pi_{\mathcal{P}}(\cdot)$ \cite{Condat2016}, as summarized in Algorithm~\ref{alg:proposed_controller}.

{\color{black}
\begin{proposition}
\label{prop:stability}
Let $\mathbf{P}(1)\in\mathcal{P}$ and let $\mathbf{P}(t+1)=\Pi_{\mathcal{P}}(\mathbf{P}(t)\odot e^{\mathbf{u}(t)})$ with $\|\mathbf{u}(t)\|_\infty\le\Delta_{\max}$. Then, for every $t\ge 1$,
\emph{(i)} $\mathbf{P}(t)\in\mathcal{P}$;
\emph{(ii)} the normalized power-update residual obeys $\rho_P(t)\triangleq\|\mathbf{P}(t+1)-\mathbf{P}(t)\|_1/P_{\max}\le\sqrt{K}\,(e^{\Delta_{\max}}-1)$, uniformly in the feedback delay and in the channel realization;
\emph{(iii)} under a frozen feedback state, any feasible allocation satisfying $\tilde{\mathbf{q}}(t)=\tilde{\mathbf{q}}(t-1)=\mathbf{0}$ is a fixed point of the one-step recursion.
\end{proposition}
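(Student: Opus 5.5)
Part (i) follows by induction from the projection itself. The set $\mathcal{P}$ in \eqref{eq:feasible_set} is the intersection of a half-space with a translated nonnegative orthant, so it is closed and convex. It is nonempty whenever $KP_{\min}\le P_{\max}$, which is implicit in the hypothesis $\mathbf{P}(1)\in\mathcal{P}$. Hence $\Pi_{\mathcal{P}}$ is well defined and single-valued, and $\Pi_{\mathcal{P}}(\mathbf{x})\in\mathcal{P}$ for every $\mathbf{x}\in\mathbb{R}^K$. The base case is the hypothesis on $\mathbf{P}(1)$. Every later iterate is an output of $\Pi_{\mathcal{P}}$, so it lies in $\mathcal{P}$ regardless of what $\mathbf{u}(t)$ is.

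For (ii), the tool is that the Euclidean projection onto a closed convex set is nonexpansive in $\ell_2$. Since $\mathbf{P}(t)\in\mathcal{P}$ by (i), we have $\Pi_{\mathcal{P}}(\mathbf{P}(t))=\mathbf{P}(t)$. Therefore
\begin{equation*}
\|\mathbf{P}(t+1)-\mathbf{P}(t)\|_2\le\|\mathbf{P}(t)\odot(e^{\mathbf{u}(t)}-\mathbf{1})\|_2 .
\end{equation*}
For each component, $|e^{u_k}-1|\le e^{\Delta_{\max}}-1$, because $|e^{u}-1|\le e^{|u|}-1$. So the right-hand side is at most $(e^{\Delta_{\max}}-1)\|\mathbf{P}(t)\|_2$. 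Feasibility and nonnegativity give $\|\mathbf{P}(t)\|_2\le\|\mathbf{P}(t)\|_1\le P_{\max}$. Passing to $\ell_1$ via $\|\cdot\|_1\le\sqrt{K}\|\cdot\|_2$ then gives
\begin{equation*}
\|\mathbf{P}(t+1)-\mathbf{P}(t)\|_1\le\sqrt{K}\,(e^{\Delta_{\max}}-1)P_{\max}.
\end{equation*}
Dividing by $P_{\max}$ yields the claim. Neither $d$ nor the channel enters this argument; only the clipping bound $\|\mathbf{u}(t)\|_\infty\le\Delta_{\max}$ from \eqref{eq:update_signal} is used, which is why the bound is uniform in the feedback delay and the channel realization.

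The only point needing care is the norm bookkeeping. Nonexpansiveness holds in $\ell_2$, not $\ell_1$, and this is exactly where the factor $\sqrt{K}$ comes from. One could instead bound the $\ell_1$ increment of the unprojected vector directly, obtaining $(e^{\Delta_{\max}}-1)P_{\max}$, but that does not survive the projection without an $\ell_2$ detour. I would therefore keep the $\ell_2$ route, which matches the stated constant.

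For (iii), freeze the feedback state so that $\tilde{\mathbf{q}}(t)=\tilde{\mathbf{q}}(t-1)=\mathbf{0}$. Then both terms inside the clipping in \eqref{eq:update_signal} vanish for every $\mu_t$ and $\beta$, so $\mathbf{u}(t)=\mathbf{0}$ and $\tilde{\mathbf{P}}(t+1)=\mathbf{P}(t)$. Because $\mathbf{P}(t)$ is feasible, the projection leaves it unchanged, so $\mathbf{P}(t+1)=\mathbf{P}(t)$. Under the frozen state the scores stay zero in the following slot as well, so the allocation is a fixed point of the one-step recursion.
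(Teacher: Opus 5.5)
Your proposal is correct and follows the paper's proof essentially step for step. Part (i) uses the fact that $\Pi_{\mathcal{P}}$ maps into $\mathcal{P}$. Part (ii) combines $\ell_2$ nonexpansiveness with $\Pi_{\mathcal{P}}(\mathbf{P}(t))=\mathbf{P}(t)$, the componentwise bound $P_k(t)|e^{u_k(t)}-1|\le P_k(t)(e^{\Delta_{\max}}-1)$, $\|\mathbf{P}(t)\|_2\le\|\mathbf{P}(t)\|_1\le P_{\max}$ and $\|\cdot\|_1\le\sqrt{K}\|\cdot\|_2$. Part (iii) uses $\mathbf{u}(t)=\mathbf{0}$, which leaves the feasible point untouched by the projection.

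Your closing remark that the scores stay zero in the following slot is not needed for the one-step claim. It also does not follow from frozen feedback alone, since $\bar{\gamma}_k$ and the power history entering $e_k$ can still evolve, so you can drop it.
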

\begin{IEEEproof}
Claim \emph{(i)} holds because $\Pi_{\mathcal{P}}$ maps into $\mathcal{P}$. For \emph{(ii)}, clipping in \eqref{eq:update_signal} gives $|\tilde{P}_k(t+1)-P_k(t)|=P_k(t)|e^{u_k(t)}-1|\le P_k(t)(e^{\Delta_{\max}}-1)$. As $\mathcal{P}$ is closed and convex, $\Pi_{\mathcal{P}}$ is nonexpansive and $\Pi_{\mathcal{P}}(\mathbf{P}(t))=\mathbf{P}(t)$ by (i), hence $\|\mathbf{P}(t+1)-\mathbf{P}(t)\|_2\le\|\tilde{\mathbf{P}}(t+1)-\mathbf{P}(t)\|_2\le(e^{\Delta_{\max}}-1)\|\mathbf{P}(t)\|_2$. Combining $\|\mathbf{x}\|_1\le\sqrt{K}\|\mathbf{x}\|_2$ with $\|\mathbf{P}(t)\|_2\le\|\mathbf{P}(t)\|_1\le P_{\max}$ gives the bound. For \emph{(iii)}, $\tilde{\mathbf{q}}(t)=\tilde{\mathbf{q}}(t-1)=\mathbf{0}$ makes $\mathbf{u}(t)=\mathbf{0}$, so $\tilde{\mathbf{P}}(t+1)=\mathbf{P}(t)\in\mathcal{P}$ and the projection is inactive. The proof is concluded.
\end{IEEEproof}
Proposition~\ref{prop:stability} therefore guarantees feasibility and a uniform bound on the slot-to-slot power variation, while a zero-priority allocation is stationary under a frozen feedback state. Asymptotic convergence is not claimed for the time-varying UAV channel, and the corresponding bounded-tracking behavior is evaluated in Section~\ref{sec:results}.

\blue{Equal power is used until the first delayed SINR report is available. At the first controller update, the previous feedback sample is initialized by the current report, i.e., $\gamma_k^{\mathrm{fb}}(t-1)=\gamma_k^{\mathrm{fb}}(t)$, which gives a zero initial prediction slope, and $P_k(t-d)$ in \eqref{eq:priority_components} is well defined because that update occurs only after the corresponding transmitted power has entered the BS power history.}

\begin{algorithm}[t]
\small
\caption{\small Proposed Delayed SINR-Feedback Power Control}
\label{alg:proposed_controller}
\KwIn{$P_{\max}$, $P_{\min}$, $\gamma_{\mathrm{th}}$, $d$, $\Delta_{\max}$, $\mu$, $\beta$, $\eta$, $\lambda$, $\Lambda_{\max}$, $\alpha$, $\gamma_{\mathrm{lb}}$, $\gamma_{\mathrm{ub}}$, $\{\omega_i\}$.}
Initialize $P_k(1)=P_{\max}/K$, $\bar{\gamma}_k(0)=\gamma_{\mathrm{th}}$, $\tilde{q}_k(0)=0$, $\forall k$\;
\For{$t=1,2,\ldots,T-1$}{
    Receive delayed SINR feedbacks $\gamma_k^{\mathrm{fb}}(t)$, $\forall k$\;
    \color{black}
    Predict $\hat{\gamma}_k(t)$ by \eqref{eq:gamma_pred} and smooth to $\bar{\gamma}_k(t)$ by \eqref{eq:gamma_smooth}, $\forall k$\;
    Form the reliability, fairness and spectral-efficiency components by \eqref{eq:priority_components}, $\forall k$\;
    Mix them into the centered priority $\tilde{q}_k(t)$ by \eqref{eq:centered_score}, $\forall k$\;
    Compute $u_k(t)$ by \eqref{eq:update_signal} and $\tilde{P}_k(t+1)=P_k(t)e^{u_k(t)}$, $\forall k$\;
    Project onto \eqref{eq:feasible_set}: $\mathbf{P}(t+1)=\Pi_{\mathcal{P}}(\tilde{\mathbf{P}}(t+1))$\;
}
\end{algorithm}

\color{black}

\subsection{Computational Complexity}\label{Sect:Complexity}
\blue{Prediction, smoothing, component construction and the log-domain update require $\mathcal{O}(K)$ operations each, and the projection is dominated by one sort, so the per-slot cost is $\mathcal{O}(K\log K)$ with $\mathcal{O}(K)$ memory. Table~\ref{tab:complexity} places it next to the benchmarks with the information each consumes and the per-update running time measured on an Intel Core i9 platform in MATLAB R2024a, averaged over all slots of the campaign.}

\section{Numerical Results and Discussion}
\label{sec:results}

\blue{A BS with $N_{\rm t}=32$ array elements serves $K=6$ UAVs that fly at 120 m over a $180^\circ$ azimuth sector at horizontal ranges of 100--600 m. Both interfaces use $f_c=5.8$ GHz, $\Delta f=15$ kHz, $M=128$ and $N=64$, giving $B=1.92$ MHz, $T_{\rm f}=4.27$ ms, a delay resolution of 521 ns and a Doppler resolution of 234 Hz, and both spend $\rho_{\rm p}=5\%$ of the grid on pilots with $N_{\rm re}=10$. The channel uses $\nu_{\rm PL}=2.2$, a Rician factor from 3 dB at grazing elevation to 12 dB at zenith \cite{8709739}, and five static ground scatterers per UAV placed 50--400 m from its ground projection. Each UAV keeps a nominal speed of 40 m/s unless stated otherwise, with a per-slot heading increment of standard deviation $2^\circ$, a first-order speed variation of standard deviation 3\%, and reflection at the annulus boundaries. Powers are normalized by $P_{\max}=1$ with $P_{\min}=0.02$, the transmit SNR is $P_{\max}/\sigma^2=10$ dB, and $d=2$ slots. 
The controller uses $\gamma_{\mathrm{th}}=6$ dB, $\lambda=0.35$ with
$\Lambda_{\max}=1.2$, $\alpha=0.6$, $\mu=0.05$, $\beta=0.1$,
$\eta=0.25$, and $\Delta_{\max}=0.18$, selected from a stable
operating region. The utility weights are set to
$(\omega_1,\omega_2,\omega_3)=(0.5,0.3,0.2)$, corresponding to the
knee of the reliability--throughput tradeoff.
Every point averages 60 topologies of 400 frames with the first 40\% discarded as transient, all schemes sharing the same draws. The curves plot trial means, and the 95\% confidence half-width stays below 1.41 dB on any SINR curve, 0.069 on any fairness curve and 0.93 bit/s/Hz on any rate curve, and equals 0.40 dB and 0.011 for the proposed controller at the operating point.}

\blue{Four benchmarks are considered besides the proposed controller running on OFDM. Equal power allocation splits $P_{\max}$ evenly. Instantaneous gain-based (GB) allocation sets $P_k\propto g_{k,k}(t)$ and therefore needs instantaneous CSI, so it acts as a genie reference. Delayed SINR balancing (SB) is the damped Foschini--Miljanic recursion \cite{260747,414651}, which drives every UAV towards the geometric mean of the delayed feedback. Delayed-CSI WMMSE maximizes the sum rate on the delayed $K\times K$ gain matrix with $I_{\rm w}=8$ inner iterations \cite{5756489}.}


\begin{table}[t]
\color{black}
\centering
\caption{\color{black}Per-slot information and computational cost at
$K=6$ and $I_{\rm w}=8$.}
\label{tab:complexity}
\renewcommand{\arraystretch}{1.12}
\setlength{\tabcolsep}{2.5pt}
\footnotesize
\begin{tabular*}{\columnwidth}{@{\extracolsep{\fill}}lccc@{}}
\toprule
\textbf{Scheme}
& \textbf{Input}
& \textbf{Complexity}
& \textbf{Time ($\mu$s)} \\
\midrule
Proposed
& $K$ delayed SINRs
& $\mathcal{O}(K\log K)$
& 11.1 \\

Equal power
& None
& $\mathcal{O}(1)$
& 2.5 \\

Instant. GB
& $K$ inst. gains
& $\mathcal{O}(K\log K)$
& 3.8 \\

Delayed SB \cite{260747,414651}
& $K$ delayed SINRs
& $\mathcal{O}(K\log K)$
& 6.6 \\

WMMSE \cite{5756489}
& $K^2$ delayed gains
& $\mathcal{O}(I_{\rm w}K^2)$\textsuperscript{$\dagger$}
& 54.5 \\
\bottomrule
\end{tabular*}
\footnotesize{\textsuperscript{$\dagger$}$I_{w}$ denotes the number of WMMSE iterations.}
\end{table}

\begin{figure}[t]
	\centering
    \captionsetup{justification=raggedright}
	\includegraphics[width=\linewidth]{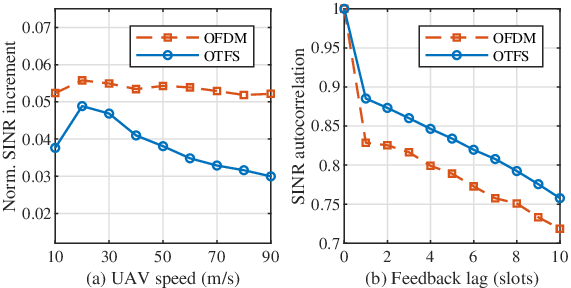}
	\caption{\blue{Air-interface behavior under equal power allocation, without any controller: (a) normalized per-frame SINR increment versus UAV speed, and (b) autocorrelation of the effective SINR versus feedback lag at 70 m/s.}}
    \label{fig:waveform}
\vspace{-0.2cm}
\end{figure}

\blue{Fig.~\ref{fig:waveform} isolates the waveform contribution by evaluating both interfaces under equal power allocation, without any controller. In Fig.~\ref{fig:waveform}(a), the normalized per-frame SINR increment of OFDM stays close to 0.053 over the whole speed range, whereas the OTFS increment falls from 0.049 at 20 m/s to 0.030 at 90 m/s. The gap stays within the confidence half-width of the trial mean up to $40$ m/s and becomes more pronounced beyond that point, with reductions of $37.8\%$ at $70$ m/s and $42.5\%$ at $90$ m/s. 
The separation becomes pronounced once the path-Doppler spread is comparable to or exceeds the DD resolution $1/T_{\rm f}=234$ Hz, allowing more of the multipath structure to be resolved on the DD grid.
In the time domain, Fig.~\ref{fig:waveform}(b) gives an OTFS autocorrelation of 0.873 at a lag of two slots against 0.825 for OFDM at 70 m/s, so the stale report describes the present link more closely under OTFS. The residual self-interference is also smaller, 0.019\% of the desired gain against 0.402\% for OFDM, but at $\Delta f=15$ kHz both are too small to drive the difference.}

\begin{figure}[t]
	\centering
    \captionsetup{justification=raggedright}
	\includegraphics[width=\linewidth]{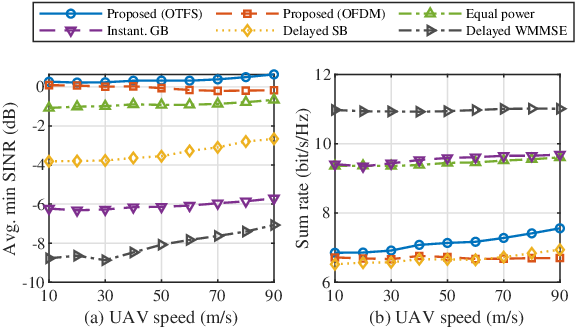}
	\caption{\blue{Impact of UAV speed on (a) the average minimum SINR and (b) the sum rate.}}
    \label{fig:speed}
\vspace{-0.2cm}
\end{figure}

Fig.~\ref{fig:speed} shows the impact of the UAV speed on the average minimum SINR and on the sum rate. As shown in Fig.~\ref{fig:speed}(a), the proposed OTFS controller attains the highest minimum SINR over the whole range. \blue{At 40 m/s, it reaches 0.31 dB, which is 1.21 dB above equal power allocation, 3.95 dB above delayed SINR balancing, 6.48 dB above instantaneous GB allocation and 8.79 dB above delayed-CSI WMMSE, the last two ending lowest because both concentrate power on the UAVs that already enjoy the strongest links, which the shortfall share of \eqref{eq:priority_components} prevents. The advantage of OTFS over OFDM inside the same controller grows monotonically with mobility, from 0.18 dB at 10 m/s to 0.81 dB at 90 m/s, following the separation seen in Fig.~\ref{fig:waveform}(a). 
All curves drift upwards modestly, because a faster UAV traverses a larger portion of the cell within the observation window, and the drift leaves the ordering unchanged. Fig.~\ref{fig:speed}(b) reports the price of that protection: 7.08 bit/s/Hz at 40 m/s against 9.38 bit/s/Hz for equal power allocation, a loss of 24.6\%, while delayed-CSI WMMSE attains the highest sum rate, 10.92 bit/s/Hz, as its objective dictates.}

\begin{figure}[t]
	\centering
    \captionsetup{justification=raggedright}
	\includegraphics[width=0.958\linewidth]{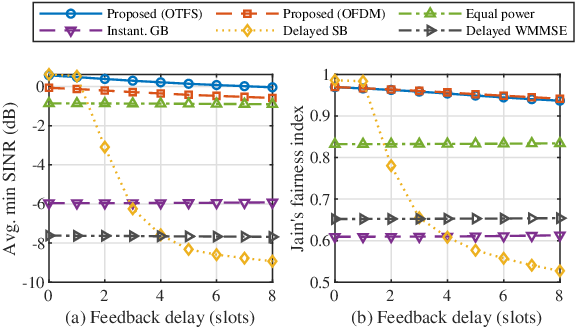}
	\caption{\blue{Impact of the feedback delay at 70 m/s on (a) the average minimum SINR and (b) Jain's fairness index.}}
    \label{fig:delay}
\vspace{-0.2cm}
\end{figure}

\blue{Fig.~\ref{fig:delay} evaluates the impact of feedback delay at a UAV speed of 70 m/s, which represents a high-mobility operating condition. In Fig.~\ref{fig:delay}(a) the proposed OTFS controller degrades gracefully, from 0.57 dB at $d=0$ to $-0.04$ dB at $d=8$, and stays above equal power allocation throughout. Delayed SINR balancing behaves in the opposite way: it is the closest competitor at $d\le 1$, then collapses to $-3.10$ dB at $d=2$ and to $-8.93$ dB at $d=8$, because its recursion maps the delayed SINR ratio directly into the power update without the delay-dependent step reduction and clipping used in \eqref{eq:update_signal}. Fig.~\ref{fig:delay}(b) confirms the same mechanism on fairness, the proposed index falling only from 0.969 to 0.937 while the balancing benchmark falls from 0.986 to 0.527. Equal power allocation is delay-independent by construction and instantaneous GB allocation reads the current gains. 
Delayed-CSI WMMSE is itself a stale-information benchmark, yet it moves only from $-7.62$ dB to $-7.69$ dB: it is already the lowest curve at $d = 0$, and the slow evolution of the effective OTFS link in Fig.~\ref{fig:waveform}(b) keeps its delayed input representative, so its weakness is its sum-rate objective rather than the staleness of its input.}

\begin{figure}[t]
	\centering
    \captionsetup{justification=raggedright}
	\includegraphics[width=\linewidth]{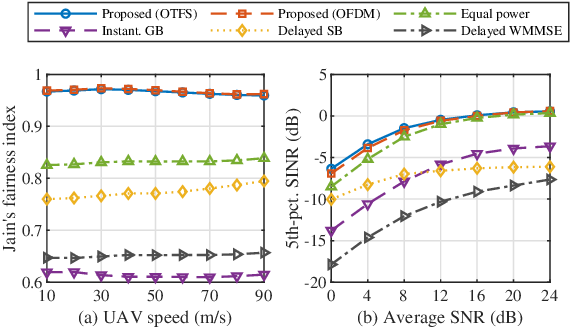}
	\caption{\blue{(a) Jain's fairness index versus UAV speed and (b) 5th-percentile SINR versus average SNR.}}
    \label{fig:fair}
\vspace{-0.2cm}
\end{figure}

Fig.~\ref{fig:fair} shows Jain's fairness index versus the UAV speed and the 5th-percentile SINR versus the average SNR. \blue{In Fig.~\ref{fig:fair}(a) the proposed controller holds the index between 0.960 and 0.972, against 0.833 for equal power allocation, 0.771 for delayed SINR balancing and 0.610 for instantaneous GB allocation. The high level follows from the reliability- and fairness-oriented shares in \eqref{eq:centered_score}, which shift power toward disadvantaged UAVs; under frozen feedback, Proposition~\ref{prop:stability}(iii) identifies stationarity when the aggregate centered priority vanishes. Jain's index is also computed on rates, whose logarithm compresses the residual SINR spread. Fairness is close to flat in speed, so its ordering is set by the allocation rule rather than by mobility. In Fig.~\ref{fig:fair}(b) the tail improves with SNR and saturates beyond about 16 dB, where the residual interference rather than the noise limits the weakest UAV, and the proposed controller keeps the highest 5th-percentile SINR throughout, by 0.52 dB over equal power allocation and 5.35 dB over instantaneous GB allocation at 12 dB.}

\begin{figure}[t]
	\centering
    \captionsetup{justification=raggedright}
	\includegraphics[width=\linewidth]{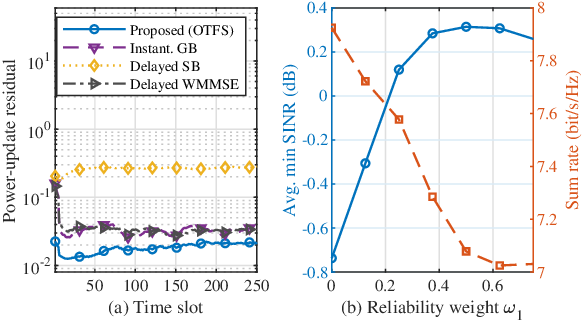}
	\caption{\blue{(a) Power-update residual versus time slot, smoothed by a nine-slot moving average, and (b) effect of the reliability weight $\omega_1$, with $\omega_2=0.6(1-\omega_1)$ and $\omega_3=0.4(1-\omega_1)$, on the average minimum SINR and on the sum rate.}}
    \label{fig:conv_weight}
\vspace{-0.2cm}
\end{figure}

\blue{Fig.~\ref{fig:conv_weight} reports the closed-loop behavior and the effect of the utility weights. In Fig.~\ref{fig:conv_weight}(a), after the initial transient, the nine-slot-smoothed residual of the proposed loop remains in the band $0.013$--$0.024$, about twenty times below the bound $\sqrt{K}(e^{\Delta_{\max}}-1)=0.483$ of Proposition~\ref{prop:stability}(ii), whereas delayed SINR balancing keeps a residual near $0.26$ because it repeatedly re-scales the power vector using stale measurements. Fig.~\ref{fig:conv_weight}(b) shows the minimum-SINR and sum-rate exchange. Raising $\omega_1$ from 0 to 0.5 lifts the average minimum SINR from $-0.74$ dB to 0.31 dB while the sum rate falls from 7.93 to 7.08 bit/s/Hz. Over the same sweep Jain's index, not plotted for readability, increases from 0.756 to 0.970. The metrics change only slightly beyond $\omega_1=0.5$, confirming that the selected operating point lies near the knee of the reliability--throughput tradeoff.}

\section{Conclusion}
\label{sec:conclusion}
This work developed delayed SINR-feedback power control for reliable and fair high-mobility OTFS UAV links on a geometry-based DD channel with delayed beam directions.
\blue{The predictor, the weighted priority rule and the projected multiplicative update adapt power from delayed SINR alone, and Proposition~\ref{prop:stability} guarantees feasibility and uniformly bounded per-slot power variation.}
\blue{On a common physical path set, OTFS shows 37.8\% lower per-frame SINR variation than OFDM at 70~m/s, and the controller gains 1.21~dB of average minimum SINR and lifts Jain's index from 0.833 to 0.970 over equal power at 40~m/s, at a 24.6\% sum-rate cost set by the utility weights. Its OTFS-over-OFDM margin grows from 0.18~dB at 10~m/s to 0.81~dB at 90~m/s, and the delay sweep shows far stronger robustness than classical delayed SINR balancing.}
Future work will consider MIMO-OTFS, joint beam and power control, and quantized SINR feedback.

\bibliographystyle{IEEEtran}
\bibliography{reference}
\end{document}